\documentclass[11pt,letterpaper]{article}

\oddsidemargin 0.625in
\textwidth 5.35in
\topmargin -0.0in
\textheight 7.0in

\usepackage{graphicx}
\usepackage{hyperref}
\usepackage{amssymb}
\usepackage{amsmath}
\usepackage{amsthm}
\addtolength{\textheight}{0.5in}

%%% theorems, lemmas, claims, blah blah.
\newtheorem{theorem}{Theorem}

\newtheorem{lemma}[theorem]{Lemma}

\newtheorem{corollary}[theorem]{Corollary}

%%% new definitions -- reals, epsilons, blah blah.

\def\A{\mathcal{A}}
\def\C{\mathcal{E}}

%%% new commands -- norms, operators, blah blah.

\setlength\parindent{0in}

\title{Satisfiability thresholds for $k$-CNF formula with bounded variable intersections}
%\author{Karthekeyan Chandrasekaran\thanks{karthe@gatech.edu}\and Navin Goyal\thanks{navingo@microsoft.com} \and Bernhard Haeupler\thanks{haeupler@mit.edu}}

\author{Karthekeyan Chandrasekaran\thanks{karthe@gatech.edu} \\ \small Georgia Institute of Technology \and Navin Goyal\thanks{navingo@microsoft.com}\\ \small Microsoft Research, India \and Bernhard Haeupler\thanks{haeupler@mit.edu} \\ \small Massachusetts Institute of Technology}

\date{}

\begin{document}
\maketitle

\begin{abstract}
We determine the thresholds for the number of variables, number of clauses, number of clause intersection pairs and the maximum clause degree of a $k$-CNF formula that guarantees satisfiability under the assumption that every two clauses share at most $\alpha$ variables. More formally, we call these formulas $\alpha$-intersecting and define, for example, a threshold $\mu_i(k,\alpha)$ for the number of clause intersection pairs $i$, such that every $\alpha$-intersecting $k$-CNF formula in which at most $\mu_i(k,\alpha)$ pairs of clauses share a variable is satisfiable and there exists an unsatisfiable $\alpha$-intersecting $k$-CNF formula with $\mu_m(k,\alpha)$ such intersections. We provide a lower bound for these thresholds based on the Lov\'asz Local Lemma and a nearly matching upper bound by constructing an unsatisfiable $k$-CNF to show that $\mu_i(k,\alpha) = \tilde{\Theta}(2^{k(2+1/\alpha)})$. Similar thresholds are determined for the number of variables ($\mu_n = \tilde{\Theta}(2^{k/\alpha})$) and the number of clauses ($\mu_m = \tilde{\Theta}(2^{k(1+\frac{1}{\alpha})})$) (see \cite{scheder08almostdisjoint} for an earlier but independent report on this threshold). Our upper bound construction gives a family of unsatisfiable formula that achieve all four thresholds simultaneously.
\end{abstract}

\bigskip

\section{Introduction}

Satisfiability of CNF is one of the most studied and versatile problems in computer science with its own journal (JSAT), competitions and an yearly conference, International Conference on Theory and Applications of Satisfiability Testing (SAT). In this paper we investigate a simple class of criteria that can guarantee satisfiability of a given $k$-CNF formula. We consider threshold criteria, i.e., for several quantities connected to a CNF (like the number of clauses, variables or variable intersections) we determine a maximum magnitude leading to satisfiable formulas. We would like to determine the exact threshold of such quantities, in the sense that there exist unsatisfiable formulas for which this quantity is greater than the threshold. A tightly determined threshold can be used as a simple satisfiability test: given a formula $F$, determine or count the specific quantities in $F$ and declare $F$ satisfiable if one of these quantities is below the threshold. Observe that such thresholds help in deciding satisfiability only if the considered quantity are below the threshold. The problem of deciding satisfiability when all these quantities are above the threshold is still a hard problem.\\

One such threshold that we consider is the \emph{number of clauses} $m$. We denote this threshold by $\mu_m(k)$ and it denotes the smallest number of clauses in an unsatisfiable formula. The trivial lower bound of $\mu_m(k)\geq 2^k$ is easily seen: each formula that consists of less than $2^k$ clauses is satisfiable since each clause eliminates only one out of the $2^k$ possible satisfying assignments. On the other hand there is an unsatisfiable $k$-CNF formula with $2^k$ clauses, namely the formula consisting of all possible $2^k$ clauses (all positive/negative literal combinations) on $k$ variables. Hence, $\mu_m(k)=\Theta(2^k)$.\\ 

Yet another prominent threshold is the \emph{maximum clause degree} $\Delta$ of a $k$-CNF formula, i.e. the maximum number of clauses that share at least one variable with a fixed clause. The complete formula on $k$ variables once again has maximum degree $2^k$ and gives an easy upper bound for this threshold. On the other hand an application of the powerful Lov\'asz Local Lemma \cite{ErdoesLovasz} shows that every formula with $\Delta <  2^k/e$ is satisfiable leading to the conclusion that $\mu_{\Delta} = \Theta(2^k)$.\\

In this paper we focus on satisfiability-threshold for a special class of formulas which guarantee that two clauses intersect only in a bounded number (henceforth we denote this by $\alpha$) of variables. These formulas are a natural extension of linear CNF formulas, i.e., formulas with $\alpha = 1$, which have been introduced in \cite{porschen2009linear}. The naming and concept of linear CNF formula comes from hypergraphs with bounded intersections as studied for example in \cite{ErdoesLovasz}. Intuitively, the restriction to bounded intersection makes it harder to build conflicting clauses which lead to unsatisfiability. And indeed it was the original goal of the authors to prove a higher satisfiability-threshold for $\Delta$ in linear $k$-CNF using stronger versions of the LLL, e.g., the soft-core LLL version of \cite{scott2005repulsive}. While it turned out that the satisfiability threshold for $\Delta$ remains $\tilde{\Theta}(2^k)$ even for linear CNFs we got interesting dependencies on $\alpha$ in the thresholds for other quantities, namely the number of variables, the number of clauses and the number of clause intersection pairs.

\section{Related work}

This paper builds highly on the techniques developed by Erd\H{o}s and Lov\'asz in the classical paper ``Problems and results on 3-chromatic hypergraphs and some related questions'' \cite{ErdoesLovasz}. Our proofs are built on the powerful Lov\'asz Local Lemma and also make use of and extend the shrinking operation (see Section \ref{sec:shrinking-hypergraphs}) that was used in \cite{ErdoesLovasz} to construct interesting linear hypergraphs. Independently but roughly a year before the authors conducted this research the paper \cite{scheder08almostdisjoint} by Dominik Scheder examined the satisfiability threshold for the number of clauses/constraints applying essentially the same techniques as here and in \cite{ErdoesLovasz}. While Scheder considers multi-value constraint satisfaction problems -- essentially a non-binary variant of CNF formula---he restricts himself to the threshold $\mu_m$. All results presented here directly extend to these multi-value CSPs too and to our knowledge this paper is the first to states the thresholds for the number of clause intersection pairs, variables and the max degree explicitly. More complicated algebraic constructions based on ideas of Kuzjurin \cite{Kuzjurin} and Kostochka and R\"{o}dl \cite{roedl} work for the restricted case $\alpha=1$ and can be found in Lemma 2.2. of \cite{Scheder10} without explicit statement of thresholds. Most notably, we use the $\alpha$-shrinking procedure not just in the lower bound but apply it to a maximal $(k+\alpha)$-uniform $\alpha$-intersecting formula in our upper bound construction. This is the key to obtaining bounds on the number of clause intersections and gives an unsatisfiable $\alpha$-intersecting formula that is extremal (up to $\log$-factors) in all considered quantities. \\

Another very interesting related work by Scheder and Zumstein is the paper ``How many conflicts does it need to be Unsatisfiable''~\cite{scheder08conflicts} in which upper and lower bounds on the threshold for conflicts are given. The notion of a conflict is closely related to clause intersections. Instead of counting the pairs of clauses that share a variable the number of conflict only counts clause pairs in which at least one variable is shared in an opposite literal. The reason why conflicts are interesting is because the lopsided version of the Lov\'asz Local Lemma \cite{erds1991lopsided} can be applied to $k$-CNF formulas in which each clause is involved in at most $2^k/e$ conflicts and thus guarantees their satisfiability. In contrast to the nearly tight threshold $\mu_i(k,\alpha) = \tilde{\Theta}(2^{k(2+1/\alpha)})$ for clause intersections in $\alpha$-intersecting formula established here, the conflict threshold is much harder to determine: the best known result for $\alpha=k$  is $\omega(2.69^k) \leq \mu_c(k,k) \leq O(4^k \frac{\log^2 k}{k})$ \cite{scheder08conflicts}.

\section{Preliminaries}

A hypergraph is $\mathbf k${\bfseries-uniform} if all edges contain exactly $k$ vertices. Two edges are called {\bfseries intersecting} if they share at least one vertex and a hypergraph is called $\mathbf \alpha${\bfseries-intersecting} if any two intersecting edges share at most $\alpha$ vertices. A $1$-intersecting hypergraph is called {\bfseries linear}. The {\bfseries edge intersection pairs} of a hypergraph are all pairs of edges that are intersecting. The {\bfseries degree of a vertex} is the number of edges it appears in and the {\bfseries degree of an edge} is the number of edges it intersects with.\\

Every $k$-CNF formula $F$ {\bf induces} a $k$-uniform (multi)-hypergraph $G_F=(V,E)$ where $V$ is the set of variables and the edge (multi)-set $E$ contains an hyperedge over vertices $\{v_1,\cdots,v_k\}$ if and only if there exists a clause consisting of the corresponding variables. This gives a one-to-one mapping between clauses and edges in the induced hypergrah and we adopt all previously introduced hypergraph terminology for $k$-CNF formula accordingly, e.g., we define clause intersection pairs as all pairs of clauses that intersect in at least one variable.\\

Throughout this paper we are interested in satisfiability thresholds for $\alpha$-intersecting $k$-CNF formula. We consider the following quantities: number of clauses $m$, number of variables $n$, maximum degree $\Delta$ and number of clause intersection pairs $i$. Denote the thresholds for a quantity $q$ with $\mu_q(\alpha,k)$. A {\bfseries satisfiability threshold} $\mu_q(\alpha,k)$ is the smallest number such that there exists an unsatisfiable $\alpha$-intersecting $k$-CNF with $q=\mu_m(\alpha, k)$. Phrased differently it is the largest number such that every $\alpha$-intersecting $k$-CNF formula with $q < \mu_q(\alpha, k)$ is satisfiable.\\

Our lower bounds to the thresholds are based on a classical application of the Lov\'asz Local Lemma \cite{ErdoesLovasz} and its more recent constructive algorithmic versions that give randomized \cite{moser08} and deterministic \cite{MT-JACM,llldeterministic} algorithms:

\begin{theorem}\label{thm:lll}
Every $k$-CNF with maximum clause degree $\Delta$ at most $\frac{2^k}{e}$ is satisfiable and there is an efficient algorithm to find such an assignment.
\end{theorem}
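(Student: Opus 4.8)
The plan is to apply the symmetric Lov\'asz Local Lemma to the uniformly random assignment. First I would set up the probability space: assign each variable a truth value independently and uniformly at random, and for each clause $C_j$ let $A_j$ be the bad event that $C_j$ is falsified. Since $C_j$ consists of exactly $k$ literals over $k$ distinct variables, $\prob{A_j} = 2^{-k}$, and since $A_j$ is a function only of the variables occurring in $C_j$, the Mutual Independence Principle shows that $A_j$ is mutually independent of the family of all $A_\ell$ whose clauses share no variable with $C_j$; thus each $A_j$ has at most $\Delta$ neighbours in the dependency graph, where $\Delta$ is the maximum clause degree. The symmetric LLL then gives $\prob{\bigcap_j \overline{A_j}} > 0$, hence a satisfying assignment exists, provided $e \cdot 2^{-k} \cdot (\Delta+1) \le 1$; I would remark that the clean hypothesis $\Delta \le 2^k/e$ implies this either after absorbing the harmless additive $1$ or by invoking the $e\,p\,d \le 1$ form of the lemma in which an event is not counted among its own dependencies.

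For the algorithmic claim I would invoke the Moser--Tardos resampling procedure \cite{moser08,MT-JACM}: start from a uniformly random assignment, and while some clause is violated, pick one such clause and resample all $k$ of its variables independently and uniformly. The witness-tree / entropy-compression analysis shows that under exactly the above condition this process halts after a polynomial (indeed $O(m)$) expected number of resampling steps, each step being trivially implementable, which yields the randomized efficient algorithm. A deterministic efficient algorithm then follows from the derandomized LLL algorithms of \cite{MT-JACM,llldeterministic}, whose hypotheses are met here with only a negligible loss in the constant (replacing $e$ by $e + o(1)$ if one wants to be completely safe), or alternatively by running the method of conditional expectations over the polynomial-size sample space underlying the Moser--Tardos analysis.

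Since every component here is entirely standard, I do not expect a genuine obstacle; the only delicate point is the bookkeeping on the constant, namely reconciling the stated bound $\Delta \le 2^k/e$ with the textbook hypothesis $e\,p\,(d+1)\le 1$. This is settled by the usual convention that an event need not appear in its own dependency neighbourhood (so that $d = \Delta$ rather than $\Delta+1$ is the relevant parameter), or equivalently by the $e\,p\,d\le 1$ variant of the symmetric LLL, and the same small slack is what is passed on to the deterministic algorithmic versions.
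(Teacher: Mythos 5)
Your proposal is correct and matches the paper's intent exactly: the paper gives no proof of this theorem, citing it as the classical symmetric Lov\'asz Local Lemma application to a uniformly random assignment together with the Moser--Tardos algorithm and its derandomizations \cite{ErdoesLovasz,moser08,MT-JACM,llldeterministic}, which is precisely the argument you spell out. Your handling of the $\Delta$ versus $\Delta+1$ constant is the right level of care (and harmless here, since the paper only uses this bound up to constant factors).
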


\section{Results}

We present lower bounds (Theorem \ref{thm:lowerbound}) and nearly matching constructive upper bounds (Theorem \ref{thm:upperbound}) that determine all thresholds $\mu_i,\mu_m,\mu_n,\mu_\Delta$ up to $\log$-factors (Theorem \ref{thm:thresholds}). Our lower bound in Theorem \ref{thm:lowerbound} consists of an algorithm based on Theorem \ref{thm:lll} that efficiently finds a satisfying assignment for any $\alpha$-intersecting $k$-CNF formula with few clause intersection pairs, variables or clauses. The upper bound in Theorem \ref{thm:upperbound} proves the existence of unsatisfiable formulas which have only slightly more clause intersections, variables and clauses. Note that while our proof of Theorem \ref{thm:lowerbound} is algorithmic, one needs an efficient implementation of Lemma \ref{lemma:construction-unsat} to make Theorem \ref{thm:upperbound} constructive(see also \cite{Scheder10}). We suspect that some of the bounds below can be improved by $O(k)$-factors but since all bounds are exponential in $k$ we did not optimize for these polylogarithmic factors.

\begin{theorem}\label{thm:lowerbound}
Every $\alpha$-intersecting $k$-CNF with less than 
$$L_i = \frac{1}{2\alpha} \left(\frac{2^{(k-\alpha)}}{ek}-1\right)^{(2+1/\alpha)} \text{   clause intersections} $$ % = O\footnote{We assume $\alpha$ to be a constant.}\left(\frac{2^{k(2+1/\alpha)}}{k^{(2+1/\alpha)}}\right)
or
$$L_n =  \left(\frac{2^{(k-\alpha)}}{ek}\right)^{1/\alpha} \text{   variables}$$ %=  O\left(2^{k/\alpha}\right)$$ 
or 
$$L_m = \frac{1}{k} \left(\frac{2^{(k-\alpha)}}{ek}\right)^{1+1/\alpha} \text{    clauses}$$ %=  O\left(\frac{2^{k(1+1/\alpha)}}{k^2}\right)$$
is satisfiable and a satisfying assignment can be found efficiently.
\end{theorem}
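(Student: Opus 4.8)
The plan is to obtain all three bounds as consequences of Theorem~\ref{thm:lll}: in each case I would argue that a formula which is below the stated threshold can be transformed, efficiently, into an instance whose maximum clause degree is below $2^k/e$, apply the algorithm of Theorem~\ref{thm:lll} to it, and lift the resulting assignment back. The combinatorial engine throughout is a Fisher-type inequality for $\alpha$-intersecting $k$-uniform hypergraphs: any $\alpha+1$ variables lie in at most one clause, since two clauses sharing $\alpha+1$ variables would violate $\alpha$-intersection. Equivalently, deleting a variable $v$ from the clauses that contain it yields a $(k-1)$-uniform $(\alpha-1)$-intersecting family, so every $\alpha$-subset avoiding $v$ lies in at most one clause through $v$; counting incidences, $\deg(v)\le\binom{n-1}{\alpha}/\binom{k-1}{\alpha}$.

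For $L_n$ this is already enough, and I would do it first. Arguing contrapositively, suppose some clause $C$ has degree at least $2^k/e$. Since the sum of $\size{C\cap D}$ over the clauses $D$ intersecting $C$ is at least $2^k/e$ and $\size{C}=k$, pigeonhole gives a variable $v$ lying in more than $\frac{2^k}{ek}$ clauses. Combining $\deg(v)>\frac{2^k}{ek}$ with $\deg(v)\le\binom{n-1}{\alpha}/\binom{k-1}{\alpha}$ and simplifying (using $\binom{k-1}{\alpha}\ge(k-\alpha)^\alpha/\alpha!$) gives $n>L_n$ with room to spare. Hence $n<L_n$ forces maximum clause degree below $2^k/e$, and Theorem~\ref{thm:lll} finishes.

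For $L_m$ and $L_i$ the one-clause argument is far too lossy: a ``star'' consisting of $\approx 2^k$ clauses that all share one common variable (and are otherwise on disjoint variables) is $1$-intersecting, has every clause of degree $\ge 2^k/e$, yet has only $\approx 2^k$ clauses and $\approx 2^{2k}$ intersecting pairs --- both well below $L_m$ and $L_i$. So unsatisfiability must be exploited globally, not just at one high-degree clause. The plan is to process $F$ by unit propagation together with \emph{restricting heavy variables}: while the current formula, of minimum clause width $w$, still has a clause of degree $\ge 2^w/e$, pigeonhole produces a variable $v$ with $\deg(v)>\frac{2^w}{ek}$; set $v$ to satisfy, and hence delete, the majority of the clauses through it, shrinking the remaining ones by a literal, and recurse. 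When the process halts every surviving clause $C$ has degree below $2^{\size{C}}/e$, so a routine non-uniform extension of Theorem~\ref{thm:lll} produces a satisfying assignment; and since restricting any variable keeps both restrictions unsatisfiable, an unsatisfiable $F$ can never reach such a low-degree formula. The bound $m\ge L_m$ then follows by tallying how many clauses are deleted or shrunk over the recursion, using the Fisher inequality to cap the clauses that any single variable touches; and $L_i$ follows by combining $m\ge L_m$ with $2i=\sum_C\deg(C)$ and the refinement $\sum_v\deg(v)(\deg(v)-1)=\sum_{C\ne D}\size{C\cap D}\le 2\alpha\,i$.

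The main obstacle I expect is precisely this last accounting for $L_m$ (and through it $L_i$): one has to show that when $F$ has few clauses the restriction recursion terminates in a genuinely low-degree instance --- in particular without first producing an empty clause --- and that the number of clauses it consumes is of order $2^{k(1+1/\alpha)}$ rather than merely the $2^k$ that a single high-degree clause guarantees. A secondary, more routine point is that making the whole reduction polynomial-time requires the constructive (Moser--Tardos) form of Theorem~\ref{thm:lll} at the low-degree leaves.
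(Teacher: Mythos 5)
Your treatment of $L_n$ is sound and in fact takes a different (arguably more direct) route than the paper: the Fisher-type bound $\deg(v)\le\binom{n-1}{\alpha}/\binom{k-1}{\alpha}$ combined with the pigeonhole step from a clause of degree $\ge 2^k/e$ does force $n>L_n$, so the contrapositive plus Theorem~\ref{thm:lll} settles that case. The genuine gap is in $L_m$ and $L_i$, which are the substantial claims, and it is exactly the one you flag yourself: your restriction recursion has no mechanism that produces the $1/\alpha$ boost in the exponent. Each step deletes on the order of $2^{w}/(ek)$ clauses, the minimum width $w$ can drop by one whenever a clause through the restricted variable survives, and an empty clause can appear after a single clause has been shrunk $k$ times; the geometric sum over those steps certifies only on the order of $2^{k}/k$ clauses, i.e.\ the trivial $\approx 2^k$ bound rather than $2^{k(1+1/\alpha)}$. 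The Fisher inequality cannot rescue the count, since it bounds variable degrees in terms of $n$, which you do not control in the $L_m$ case. Likewise $i\ge L_i$ does not follow from $m\ge L_m$ together with $2i=\sum_C\deg(C)$ and $\sum_v\deg(v)(\deg(v)-1)\le 2\alpha i$: a large number of clauses by itself forces no intersections at all (disjoint clauses), so the intersection bound needs its own supply of many high-degree variables. (A secondary point: ``every clause $C$ has degree below $2^{|C|}/e$'' is not by itself a valid non-uniform LLL hypothesis --- the neighbors' widths, not just their number, must enter --- though that part is fixable.)

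What is missing is the paper's key device, which replaces your assignment-based recursion by a single width reduction that does not need to preserve unsatisfiability: $\alpha$-shrink $F$ by deleting the $\alpha$ highest-degree variables from every clause, obtaining a $(k-\alpha)$-CNF $F'$; any satisfying assignment of $F'$ satisfies $F$. If every variable of $F'$ has degree below $d=2^{k-\alpha}/(ek)$, Theorem~\ref{thm:lll} applies to $F'$ and you are done, efficiently. Otherwise Lemma~\ref{lemma:shrinking} --- whose proof is an injectivity argument, using the $\alpha$-intersecting property, from the surviving edges at a high-degree variable to the $\alpha$-sets deleted from them --- shows that $F$ itself has more than $d^{1/\alpha}$ variables of degree at least $d$. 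Summing the cliques of clauses at these variables, overcounting each intersecting pair at most $\alpha$ times and each clause at most $k$ times, yields all three bounds $L_i$, $L_m$, $L_n$ simultaneously. Without this step (or some substitute that manufactures $d^{1/\alpha}$ high-degree variables from unsatisfiability), your plan establishes the theorem only for $L_n$.
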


\medskip

\begin{theorem}\label{thm:upperbound}
For any $k$ and $\alpha < k$ there is an unsatisfiable $\alpha$-intersecting $k$-CNF with at most
$$U_i = \alpha^2 2^{(k+\alpha)(2+1/\alpha)} k^{(5+2/\alpha)}  \text{   clause intersections}$$
and 
$$U_n = 2\alpha 2^{k/\alpha} k^{2(1+1/\alpha)}   \text{   variables}$$
and
$$U_m = \alpha 2^{(k+\alpha)(1+1/\alpha)}{k^{2(1+1/\alpha)}} \text{    clauses}$$
and 
$$U_{\Delta} = \alpha 2^{(k+\alpha)}k^2 \text{   maximum degree}.$$
\end{theorem}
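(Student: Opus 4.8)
The theorem is essentially a corollary of Lemma~\ref{lemma:construction-unsat}, which produces the unsatisfiable formula by $\alpha$-shrinking a maximal $(k+\alpha)$-uniform $\alpha$-intersecting formula on a suitable ground set of variables; granting that lemma, all that remains is to choose the size of the ground set and count the four quantities. So the plan is: (1) choose $n_0$, the number of variables of the base $(k+\alpha)$-uniform formula, as small as Lemma~\ref{lemma:construction-unsat} permits; (2) estimate the number of clauses, the maximum clause degree, and the number of clause intersection pairs of a maximal $(k+\alpha)$-uniform $\alpha$-intersecting formula on $n_0$ variables; (3) observe that $\alpha$-shrinking --- which just removes $\alpha$ literals from every clause --- turns this into a $k$-uniform formula with at most as many variables, clauses, and clause intersection pairs, while unsatisfiability and the $\alpha$-intersecting property are preserved (the point being that shortening a clause only strengthens it, and shrinking a hyperedge only shrinks its intersections), so the estimates of (2) bound the output; (4) substitute $n_0$ and simplify to the stated closed forms.

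For (1)--(2): since the base family is $\alpha$-intersecting and $\alpha<k$, every $(\alpha+1)$-subset of the $n_0$ variables lies in at most one clause, so the base has at most $\binom{n_0}{\alpha+1}/\binom{k+\alpha}{\alpha+1}$ clauses; a greedy / R\"odl-nibble packing argument shows a maximal family on $n_0$ variables attains this up to $\mathrm{poly}(k)$ factors, i.e.\ has $\tilde{\Theta}\bigl((n_0/k)^{\alpha+1}\bigr)$ clauses, maximum vertex degree $\tilde{O}\bigl((n_0/k)^{\alpha}\bigr)$, hence maximum clause degree $\tilde{O}\bigl(k(n_0/k)^{\alpha}\bigr)$, and at most $\tfrac12 m\Delta$ clause intersection pairs. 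What pins down $n_0$ is the hypothesis of Lemma~\ref{lemma:construction-unsat}: the base formula must be dense enough to be forced unsatisfiable --- morally its maximum clause degree should sit just above the Lov\'asz Local Lemma threshold $2^{k+\alpha}/e$ of Theorem~\ref{thm:lll}, below which it would be satisfiable --- and solving for this density gives $n_0=\tilde{\Theta}(2^{k/\alpha})$. Substituting $n_0=\tilde{\Theta}(2^{k/\alpha})$ into the four estimates and collecting the $\mathrm{poly}(k)$ and $\alpha$ factors yields exactly $U_n$, $U_m$, $U_\Delta$, $U_i$ with the exponents $2^{k/\alpha}$, $2^{(k+\alpha)(1+1/\alpha)}$, $2^{k+\alpha}$, $2^{(k+\alpha)(2+1/\alpha)}$; these are routine calculations I would not grind through.

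The real obstacle lives entirely inside Lemma~\ref{lemma:construction-unsat}: proving that a maximal $(k+\alpha)$-uniform $\alpha$-intersecting formula on $\tilde{\Theta}(2^{k/\alpha})$ variables, with an appropriate choice of literal signs, is genuinely unsatisfiable. This is right at the Local Lemma threshold, so one cannot rely on slack and must instead exploit the rigidity of a maximal packing (every small vertex set is already heavily used); this is where the techniques extending \cite{ErdoesLovasz} come in, and an efficient implementation is needed to make the construction constructive. Beyond that, the work in the present theorem is bookkeeping: keeping the $\mathrm{poly}(k)$ gap between a maximum and a merely maximal packing --- and between the packing bound and what the nibble/greedy argument actually delivers --- under control, and noting that one and the same extremal base formula witnesses all four bounds simultaneously, since $U_n,U_m,U_\Delta,U_i$ are just the variable, clause, maximum-degree, and intersection-pair counts of the single formula built in step (3).
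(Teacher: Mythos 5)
Your skeleton matches the paper's construction (maximal $(k+\alpha)$-uniform $\alpha$-intersecting hypergraph on $n_0=\tilde{\Theta}(2^{k/\alpha})$ vertices, an edge-count lower bound for maximal families, greedy sign choice to force unsatisfiability, then $\alpha$-shrinking, with unsatisfiability and the $\alpha$-intersecting property preserved), and it does deliver $U_n$ and $U_m$. But there is a genuine gap for $U_\Delta$ and hence $U_i$: you bound the maximum degree of the \emph{base} family by the packing bound $\tilde{O}\bigl((n_0/k)^{\alpha}\bigr)$ and claim that substituting $n_0$ gives $U_\Delta$ up to $\mathrm{poly}(k)$ and $\alpha$ factors. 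It does not. With $n_0 \approx \alpha\,2^{(k+\alpha)/\alpha}k^{2(1+1/\alpha)}$ one gets $(n_0/k)^{\alpha}\approx \alpha^{\alpha}2^{k+\alpha}k^{\alpha+2}$, which exceeds $U_\Delta=\alpha 2^{k+\alpha}k^{2}$ by roughly $\alpha^{\alpha-1}k^{\alpha}$; since the theorem allows any $\alpha<k$, this is a $k^{\Theta(\alpha)}$ loss, not a $\mathrm{poly}(k)$ one. Nor can you appeal to the average degree (which is only about $(k+\alpha)2^{k+\alpha}$): a merely maximal packing, or even a nibble-produced one extended to a maximal one, carries no guarantee that its maximum degree is anywhere near its average. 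The paper's fix is exactly the tool you never invoke, Lemma \ref{lemma:shrinking}: if a vertex of degree $d$ survives $\alpha$-shrinking, then the base $H$ has more than $d^{1/\alpha}$ vertices of degree at least $d$, hence more than $d^{1+1/\alpha}/(k+\alpha)$ edges; since $H$ has only $m$ edges, the \emph{shrunk} hypergraph $H'$ has maximum degree at most $(m(k+\alpha))^{\alpha/(\alpha+1)}\approx \alpha 2^{k+\alpha}k^{3}$, and then $i\le m\Delta$ gives $U_i$. This degree control on $H'$ (not on $H$) is the whole point of shrinking a maximal family in the upper bound, as the paper itself stresses.

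A secondary misreading: you locate the "real obstacle" inside Lemma \ref{lemma:construction-unsat}, claiming one must exploit the rigidity of a maximal packing at the Local Lemma threshold. In fact that lemma has nothing to do with the LLL or with maximality: its hypothesis is simply $m\ge n\,2^{(\text{clause width})}$, and its proof is a short greedy covering/averaging argument over the $2^{n}$ assignments. Maximality enters only through Lemma \ref{lem:maximal-hypergraphs} (the Tur\'an-number bound of de Caen), which guarantees that a maximal $\alpha$-intersecting $(k+\alpha)$-uniform hypergraph on the chosen $n$ has at least $2^{k+\alpha}n$ edges; the construction is applied to that base formula and then shrunk, exactly as you propose. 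So the unsatisfiability side of your plan is fine once the lemma is cited correctly; the missing ingredient is the shrinking-lemma degree argument, without which the stated $U_\Delta$ and $U_i$ are not reached.
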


\medskip

In the following $\tilde{\Theta}(x)$ means $\Theta(x (\log x)^c)$ for some absolute 
(positive or negative) constant $c$. Combining the above two theorems yields good estimates for the thresholds: 

\begin{corollary}\label{thm:thresholds}
The thresholds for satisfiability are:
\begin{itemize}
	\item number of clause intersections: $\mu_{i} = \tilde{\Theta}(2^{k(2+1/\alpha)})$
	\item number of variables: $\mu_n = \tilde{\Theta}(2^{k/\alpha})$
	\item number of clauses: $\mu_m = \tilde{\Theta}(2^{k(1+\frac{1}{\alpha})})$
	\item maximum degree:	$\mu_{\Delta} = \tilde{\Theta}(2^{k})$
\end{itemize}
\end{corollary}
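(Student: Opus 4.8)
The plan is to obtain each threshold by sandwiching $\mu_q$ between a lower bound and an upper bound and then checking that both sides match the claimed expression up to a factor that is polynomial in $k$ together with a constant depending only on $\alpha$ --- which is precisely what $\tilde\Theta$ is allowed to hide here, since for every quantity in question $\log x = \Theta_\alpha(k)$, so ``polylog in $x$'' means ``polynomial in $k$'' once $\alpha$ is regarded as fixed. Theorem \ref{thm:lowerbound} states that every $\alpha$-intersecting $k$-CNF with fewer than $L_q$ copies of the quantity $q$ is satisfiable, hence $\mu_q \geq L_q$; Theorem \ref{thm:upperbound} exhibits an unsatisfiable $\alpha$-intersecting $k$-CNF with $q \leq U_q$, hence $\mu_q \leq U_q$. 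Thus it suffices to verify that both $L_q$ and $U_q$ equal $\tilde\Theta$ of the stated value for $q \in \{i,n,m\}$, and to supply the one missing lower bound, for $\Delta$, separately.

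The verification is pure bookkeeping on exponents. For the clause intersections, the dominant term of $L_i$ is $\frac{1}{2\alpha}\left(\frac{2^{k-\alpha}}{ek}\right)^{2+1/\alpha}$, and since $(k-\alpha)(2+1/\alpha) = k(2+1/\alpha) - (2\alpha+1)$, factoring out the constant $\frac{1}{2\alpha}\,2^{-(2\alpha+1)}$ and the polynomial $(ek)^{-(2+1/\alpha)}$ leaves exactly $2^{k(2+1/\alpha)}$; the subtracted $1$ inside $L_i$ is absorbed because $\frac{2^{k-\alpha}}{ek} - 1 = \Theta\!\left(\frac{2^{k-\alpha}}{ek}\right)$ once $\frac{2^{k-\alpha}}{ek} \geq 2$. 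Symmetrically $U_i = \alpha^2 2^{(k+\alpha)(2+1/\alpha)} k^{5+2/\alpha}$ with $(k+\alpha)(2+1/\alpha) = k(2+1/\alpha) + (2\alpha+1)$, and the remaining $\alpha^2 k^{5+2/\alpha}$ is polynomial in $k$; so $L_i, U_i = \tilde\Theta(2^{k(2+1/\alpha)})$ and therefore $\mu_i = \tilde\Theta(2^{k(2+1/\alpha)})$. The identities $(k\mp\alpha)(1+1/\alpha) = k(1+1/\alpha) \mp (\alpha+1)$ give the same conclusion for $L_m$ and $U_m$, and $(k-\alpha)/\alpha = k/\alpha - 1$ together with the plain exponent $k/\alpha$ in $U_n$ gives it for $L_n$ and $U_n$. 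For the maximum degree, the lower bound is not furnished by Theorem \ref{thm:lowerbound} but comes straight from Theorem \ref{thm:lll}: every $k$-CNF --- and in particular every $\alpha$-intersecting one --- with $\Delta < 2^k/e$ is satisfiable, so $\mu_\Delta \geq 2^k/e$, while $U_\Delta = \alpha 2^{k+\alpha}k^2$ is $\tilde\Theta(2^k)$; hence $\mu_\Delta = \tilde\Theta(2^k)$.

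There is no genuine obstacle here: the corollary is a consequence of the two preceding theorems plus an exponent count. The only points requiring care are making explicit that the constants and polylog factors hidden by $\tilde\Theta$ are permitted to depend on the fixed parameter $\alpha$, and noting that the $-1$ terms appearing in $L_i$, $L_n$ and $L_m$ are harmless only for $k$ larger than a threshold depending on $\alpha$ (namely once $2^{k-\alpha} \geq 2ek$); for smaller $k$ the asymptotic statement imposes no constraint.
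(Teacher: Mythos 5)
Your proposal is correct and matches the paper's (implicit) argument: the corollary is obtained exactly by sandwiching each $\mu_q$ between the bounds of Theorem \ref{thm:lowerbound} and Theorem \ref{thm:upperbound}, with the missing lower bound for $\mu_\Delta$ supplied by Theorem \ref{thm:lll}, and the rest is exponent bookkeeping as you describe. (Minor nit: only $L_i$ contains a ``$-1$'' term, not $L_n$ or $L_m$, but this does not affect anything.)
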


\section{Shrinking and Maximal $\alpha$-intersecting Hypergraphs}\label{sec:shrinking-hypergraphs}
This section contains useful lemmas about hypergraphs needed to prove the main theorems. One operation that will be particularly helpful for both the lower and the upper bound is the $\mathbf \beta${\bfseries -shrinking} operation. The shrinking operation creates a $k$-uniform hypergraph $H'$ from a $(k+\beta)$-uniform hypergraph $H$ by deleting the $\beta$ vertices of maximum degree from each edge breaking ties arbitrarily. Shrinking is similarly defined for $(k+\beta)$-CNF formulas where the variables with highest degree are deleted from each clause. The next lemma shows that a high degree vertex can survive the $\beta$-shrinking procedure to remain a high degree vertex only if many such high degree vertices are present in the original hypergraph.\\
 
\begin{lemma}\label{lemma:shrinking}
Let $H$ be a $(k+\alpha)$-uniform $\alpha$-intersecting hypergraph and $H'$ be the result of $\alpha$-shrinking $H$. If $H'$ has a vertex of degree $d$, then $H$ has more than $d^{1/\alpha}$ vertices of degree at least $d$. 
\end{lemma}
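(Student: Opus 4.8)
The plan is to fix a vertex $v$ with $\deg_{H'}(v)=d$ and look at which edges of $H$ through $v$ survived the $\alpha$-shrinking. Since shrinking only deletes vertices from edges and never creates new ones, $v$ can only lose degree, so $\deg_H(v)\ge d$ and $v$ survives shrinking in at least $d$ edges $e_1,\dots,e_d$ of $H$ (i.e.\ in each $e_i$ the vertex $v$ is not among the $\alpha$ deleted ones). For each $i$ let $S_i\subseteq e_i$ be the set of the $\alpha$ vertices removed from $e_i$. Because the deleted vertices are those of highest degree in $e_i$ and $v$ is not among them, every vertex of $S_i$ has degree in $H$ at least $\deg_H(v)\ge d$ (this is unaffected by how ties are broken). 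So, writing $T:=\{u\in V(H)\suchthat \deg_H(u)\ge d\}$, each $S_i$ is an $\alpha$-element subset of $T$; moreover $v\in T$ and $v\notin S_i$, so in fact $S_i\subseteq T\setminus\{v\}$.

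The crux is to show that $S_1,\dots,S_d$ are pairwise distinct, and this is exactly where the $\alpha$-intersecting hypothesis enters. If $S_i=S_j$ with $i\neq j$, then $e_i\cap e_j$ contains $S_i\cup\{v\}$, a set of $\alpha+1$ distinct vertices (distinct since $v\notin S_i$), contradicting the fact that any two edges of $H$ share at most $\alpha$ vertices. Hence $T\setminus\{v\}$ has at least $d$ distinct $\alpha$-element subsets, i.e.\ $\binom{|T|-1}{\alpha}\ge d$. Applying the crude inequality $\binom{n}{\alpha}\le n^{\alpha}$ yields $(|T|-1)^{\alpha}\ge d$, whence $|T|\ge d^{1/\alpha}+1>d^{1/\alpha}$, which is the assertion — the extra ``$+1$'' coming precisely from the spare vertex $v$ that lies in $T$ but in none of the $S_i$.

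I do not anticipate a genuine obstacle here: once the $S_i$ are known to be distinct, the rest is elementary counting, and distinctness is an immediate consequence of $\alpha$-intersection together with the observation $v\notin S_i$. The only points requiring a little care are bookkeeping ones — confirming that ``degree'' in the shrinking rule refers to degree in $H$ (so that the deleted vertices really have degree $\ge d$ regardless of how ties are resolved), and making sure the strict inequality is preserved by keeping track of $v$ itself as an additional degree-$\ge d$ vertex outside every $S_i$.
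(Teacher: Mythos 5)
Your proposal is correct and follows essentially the same argument as the paper: survive-in-$d$-edges, map each surviving edge to its $\alpha$-set of deleted high-degree vertices, use the $\alpha$-intersecting property (together with $v$ itself) to get injectivity, then count via $\binom{N}{\alpha}$. Your bookkeeping of $v$ as an extra degree-$\ge d$ vertex outside every $S_i$ even handles the strict inequality a bit more cleanly than the paper's $\binom{N}{\alpha}<N^{\alpha}$ step (which is not strict when $\alpha=1$).
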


\begin{proof}
Let $v$ be the vertex in $H'$ of degree $d$. Since $H'$ was created by shrinking $H$ there are at least $d$ edges in $H$ in which $v$ is present but did not get deleted. We call the set of those edges $\C$; then we know that $|\C|\geq d$. From each edge $e \in \C$, exactly $\alpha$ vertices got deleted all of which are of degree of at least $d$. We claim that the mapping that maps each $e \in \C$ to this $\alpha$-sized set of deleted vertices is injective:\\
 
Suppose two edges $e_1, e_2\in \C$ get mapped to the same $\alpha$-sized set of vertices. Then, the edges $e_1$ and $e_2$ intersect in these $\alpha$ vertices; furthermore they also intersect in the vertex $v$ and thus intersect in $\alpha+1$ vertices. This is a contradiction to the $\alpha$-intersecting property of $H$.\\

Injectivity gives us that there are $|\C| \geq d$ different $\alpha$-sized subsets of vertices which got deleted instead of $v$ while shrinking. All vertices in those subsets must have degree at least $d$ by definition of the shrinking operation. Furthermore if $N$ is the number of distinct vertices in those subsets then we have $d \leq \binom{N}{\alpha} < N^{\alpha}$. Therefore there are at least $N > d^{1/\alpha}$ vertices with degree at least $d$ in $H$. 
\end{proof}

\medskip

The next lemma proves that any maximal $\alpha$-intersecting hypergraph on $n$ vertices must have a large number of edges. It uses a bound on the Tur{\'a}n number that is due to de Caen \cite{de1983extension}. The Tur{\'a}n number $T(n,k,r)$ for $r$-uniform hypergraphs with $n$ vertices is the smallest number of edges possible such that every set of $k$ vertices contains at least one edge. This number was determined for graphs by Tur{\'a}n \cite{turan1941extremal} and extended to hypergraphs by himself in the report ''Research Problems''\cite{turan1961research}.\\

\begin{lemma}\label{lem:maximal-hypergraphs}
Every maximal $\alpha$-intersecting hypergraph $H$ on $n$ vertices has at least $m \geq \frac{\binom{n}{\alpha+1}}{\binom{k}{\alpha+1}^2}$ edges. 
\end{lemma}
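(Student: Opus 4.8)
The plan is to reduce the statement to a lower bound on a Turán number, applied to an auxiliary $(\alpha+1)$-uniform hypergraph. Given a maximal $\alpha$-intersecting $k$-uniform hypergraph $H$ with $m$ edges on vertex set $V$, $\size{V}=n$, I would form the $(\alpha+1)$-uniform hypergraph $H^{\ast}$ on $V$ whose edges are precisely the $(\alpha+1)$-element subsets of $V$ that are contained in some edge of $H$. Since every edge of $H$ is a $k$-set and therefore contains exactly $\binom{k}{\alpha+1}$ subsets of size $\alpha+1$, we immediately get the upper bound $\size{E(H^{\ast})} \le m\binom{k}{\alpha+1}$ (in fact this is an equality, because the $\alpha$-intersecting property forbids two distinct edges of $H$ from sharing an $(\alpha+1)$-subset, but only the inequality is needed).

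The crux is the complementary fact that $H^{\ast}$ is a \emph{Turán system}, i.e.\ every $k$-element set $S \subseteq V$ contains at least one edge of $H^{\ast}$. I would prove this by a short case analysis using maximality. If $S$ is itself an edge of $H$, then every $(\alpha+1)$-subset of $S$ is an edge of $H^{\ast}$, so $S$ certainly contains one. If $S$ is not an edge of $H$, then by maximality we cannot add $S$ to $H$ without violating the $\alpha$-intersecting property; hence there exists an edge $e\in E(H)$ with $\size{S\cap e}\ge \alpha+1$. Choosing any $(\alpha+1)$-subset $T\subseteq S\cap e$, the set $T$ lies inside $e$ and is thus an edge of $H^{\ast}$ contained in $S$. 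This shows $\size{E(H^{\ast})}\ge T(n,k,\alpha+1)$.

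It then remains to invoke de Caen's lower bound for the Turán number, in the form $T(n,k,\alpha+1)\ge \binom{n}{\alpha+1}/\binom{k}{\alpha+1}$; already the standard double counting of incident (edge of $H^{\ast}$, $k$-set) pairs gives this, via the identity $\binom{n}{k}\binom{k}{\alpha+1}=\binom{n}{\alpha+1}\binom{n-\alpha-1}{k-\alpha-1}$, and de Caen's result only sharpens the constant. Combining the two bounds on $\size{E(H^{\ast})}$ yields $m\binom{k}{\alpha+1}\ge \binom{n}{\alpha+1}/\binom{k}{\alpha+1}$, i.e.\ $m\ge \binom{n}{\alpha+1}/\binom{k}{\alpha+1}^{2}$, which is the claim.

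The only delicate point is the middle step: one must be precise about the meaning of ``maximal'' (maximal with respect to adding $k$-uniform edges while keeping the hypergraph $\alpha$-intersecting) and check the degenerate regimes $n<k$ or $\alpha+1>n$, in which the right-hand side is zero or the claim is otherwise vacuous. Granting the correct setup, the argument is essentially a reduction to a Turán-type covering bound plus one line of double counting, so I expect no serious obstacle beyond getting that reduction right.
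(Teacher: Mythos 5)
Your proposal is correct and takes essentially the same route as the paper: maximality forces the covered $(\alpha+1)$-subsets to form a Tur\'an system for $k$-sets, and comparing the at most $m\binom{k}{\alpha+1}$ covered subsets with $T(n,k,\alpha+1)$ yields the bound. The only (harmless, in fact simplifying) deviation is that you note the elementary double-counting bound $T(n,k,\alpha+1)\ge \binom{n}{\alpha+1}/\binom{k}{\alpha+1}$ already suffices, whereas the paper invokes de Caen's sharper estimate and then has to absorb the extra factor.
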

\begin{proof}
Let $H$ be a maximal $\alpha$-intersecting hypergraph on $m$ edges. Since $H$ is $\alpha$-intersecting each of the $\binom{n}{\alpha+1}$ subsets of vertices of size $\alpha+1$ is covered by at most one distinct hyperedge of $H$. Also, $H$ covers exactly $m \binom{k}{\alpha+1}$ distinct subsets of size $\alpha+1$ in $H$. If $m{\binom{k}{\alpha+1}} < {T(n,k,\alpha+1)}$ the $\alpha+1$-uniform hypergraph consisting of all covered $\alpha+1$-size subsets has less than $T(n,k,\alpha+1)$ edges and therefore $\exists$ a $k$-subset $K$ that does not contain any covered edge. This $k$-subset can be added as an edges into $H$ while preserving it to be $\alpha$-intersecting. Indeed, if some edge $e$ intersects $K$ in at least $\alpha+1$ vertices, then the corresponding set of vertices is covered contradicting the choice of $K$. Thus if $m<\frac{T(n,k,\alpha+1)}{\binom{k}{\alpha+1}}$ then $H$ is not maximal $\alpha$-intersecting. To finish we use a lower bound of de Caen \cite{de1983extension} on the Tur{\'a}n number: $T(n,k,\alpha+1) \geq \frac{n-k+1}{n-\alpha}\binom{n}{\alpha+1}/\binom{k-1}{\alpha}$; plugging this in gives the desired result.
\end{proof}

We remark that the same result also appears in Scheder 
\cite{scheder08almostdisjoint} with somewhat simpler and
self-contained proof.

\section{A Constructive Lower Bound}

This section gives the proof for the lower bound in Theorem \ref{thm:lowerbound}:\\
\iffalse
\begin{theorem}\label{thm:lowerbound}
Every $\alpha$-intersecting $k$-CNF with less than 
\[\i = \frac{1}{2\alpha} \left(\frac{2^{(k-\alpha)}}{ek}-1\right)^{(2+1/\alpha)} \] % = O\footnote{We assume $\alpha$ to be a constant.}\left(\frac{2^{k(2+1/\alpha)}}{k^{(2+1/\alpha)}}\right)
clause intersections or
$$n =  \left(\frac{2^{(k-\alpha)}}{ek}\right)^{1/\alpha} $$ %=  O\left(2^{k/\alpha}\right)$$
variables or 
$$m = \frac{1}{k} \left(\frac{2^{(k-\alpha)}}{ek}\right)^{1+1/\alpha} $$ %=  O\left(\frac{2^{k(1+1/\alpha)}}{k^2}\right)$$
clauses is satisfiable and a satisfying assignment can be found efficiently.
\end{theorem}
\fi

\begin{proof} (of Theorem \ref{thm:lowerbound})\\
We prove that every $\alpha$-intersecting $k$-CNF $F$ is either satisfiable by Theorem \ref{thm:lll} after $\alpha$-shrinking it or it must have large clause intersection pairs, variables, clauses and a high maximum degree contradicting the hypothesis about the formula $F$.\\

Let $F'$ be the resulting $(k-\alpha)$-CNF we get from $\alpha$-shrinking $F$. If all variables in $F'$ have degree less than $d = 2^{(k-\alpha)}/ek$ then the Lov\'asz Local Lemma guarantees that $F'$ is satisfiable and Theorem \ref{thm:lll} states that a satisfying assignment can be efficiently found. Note that a satisfying assignment for $F'$ is also a satisfying assignment for $F$.\\

In the other case, suppose $F'$ has at least one variable of degree $d$. Then, Lemma \ref{lemma:shrinking} shows that $F$ must have at least $d^{1/\alpha}$ variables of degree at least $d$.\\

To count the number of clause intersection pairs in $F$, we count the intersections of clauses containing one of the $d^{1/\alpha}$ high degree variables. For each such variable the clauses containing it induce a clique with $(d-1)^2/2 $ intersections. Taking the disjoint union of these intersections we get at least $(d-1)^{2+1/\alpha}/2$ intersections but overcount each intersection up to $\alpha$-times since two clauses can intersect in up to $\alpha$ variables. Therefore $F$ has at least $\frac{1}{2\alpha} (d-1)^{2+1/\alpha}$ intersections.\\

To count the number of clauses in $F$ we look at the union of the clauses containing one of the $d^{1/\alpha}$ variables. There are at least $d^{1+1/\alpha}$ clauses in the non disjoint union and each clause can get added because of each of its $k$ variables at most once. Thus $F$ has at least $d^{1+1/\alpha}/k$ clauses.\\

Finally it is clear that $F$ has at least $d^{1/\alpha}$ variables. 
\end{proof}

\section{Upper bounds for the thresholds}

This section gives the proof for the upper bounds in Theorem \ref{thm:lowerbound}.\\

Before we prove the theorem itself, the following lemma gives a general way to transform a sufficiently dense $k$-uniform hypergraph into an unsatisfiable $k$-CNF formula by iteratively taking a hyperedge and greedily choosing positive or negative literals for the variables:\\

\begin{lemma}\label{lemma:construction-unsat}
If there is a $k$-uniform hypergraph $H$ on $n$ vertices and at least $m = n 2^k$ edges than there exists an unsatisfiable $k$-CNF $F$ inducing $H$. 
\end{lemma}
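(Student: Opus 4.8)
The plan is to build $F$ by keeping every edge of $H$ as a clause and only deciding, for each edge, whether each of its $k$ variables occurs positively or negatively. This automatically guarantees that the induced hypergraph $G_F$ equals $H$, so the sole task is to choose the signs so that $F$ is unsatisfiable. I would do this greedily, processing the edges $e_1, e_2, \dots$ one at a time and tracking the set $A_i \subseteq \{0,1\}^n$ of assignments that satisfy all of the clauses chosen so far.

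The key observation is that for a fixed edge $e = \{v_1,\dots,v_k\}$ the $2^k$ possible sign patterns produce $2^k$ clauses whose falsifying sets are pairwise disjoint and together cover all of $\{0,1\}^n$: an assignment $x$ falsifies the clause on $e$ iff it makes all $k$ literals false, which pins down $x$ on $v_1,\dots,v_k$, and for each of the $2^k$ possible restrictions of $x$ to $\{v_1,\dots,v_k\}$ there is exactly one sign pattern that is falsified. Hence, when we reach edge $e_i$ with surviving set $A_{i-1}$, the $2^k$ choices partition $A_{i-1}$ into $2^k$ classes, and choosing the sign pattern of a largest class removes at least $|A_{i-1}|/2^k$ assignments, giving $|A_i| \le |A_{i-1}|\,(1 - 2^{-k})$.

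Iterating from $|A_0| = 2^n$ over the first $m = n 2^k$ edges yields $|A_m| \le 2^n (1 - 2^{-k})^{\,n 2^k} \le 2^n e^{-n} = (2/e)^n < 1$, so $A_m = \emptyset$ and $F$ has no satisfying assignment. If $H$ has strictly more than $n 2^k$ edges, the remaining edges are turned into clauses with arbitrary signs; adding more clauses cannot restore satisfiability, so $F$ is still unsatisfiable while inducing exactly $H$.

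I do not expect a real obstacle here: the only points requiring care are verifying the partition claim above (which is what makes the averaging step valid) and checking that the crude estimate $1 - x \le e^{-x}$ already clears the threshold with the clean value $m = n 2^k$, rather than needing the sharper $n 2^k \ln 2$. The same argument also shows the construction is algorithmic once one can enumerate $A_{i-1}$, though, as the paper already remarks, a naive implementation is exponential in $n$.
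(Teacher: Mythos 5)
Your proposal is correct and follows essentially the same argument as the paper: greedily assign signs edge by edge, using the fact that the $2^k$ sign patterns on an edge partition the assignments so that some choice kills at least a $2^{-k}$ fraction of the surviving satisfying assignments, and then the bound $2^n(1-2^{-k})^{n2^k} \le (2/e)^n < 1$ finishes the proof. Your explicit handling of extra edges beyond $m$ and the cleaner $(2/e)^n$ estimate are minor presentational improvements, not a different route.
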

\begin{proof}
Denote the vertices in $H$ by $v_1,\ldots,v_n$ and associate with them, the variables $x_1,\cdots,x_n$ that will occur in $F$.

We will denote $\A\in\{0,1\}^n$ to be an assignment if we pick an assignment of values to variables $x$ by setting $x_i=\A_i$.
We say that a clause \emph{covers} an assignment $\A\in\{0,1\}^n$ if it is not satisfied by the assignment. We will iteratively create a clause for every edge in $H$ greedily covering the maximum number of yet uncovered assignments. We have to show that in the end all $2^n$ assignments are covered. Consequently, the conjunction of the created clauses forms an unsatisfiable $k$-CNF.\\

We pick edges $e$ from $H$ in an arbitrary order. We want to create a clause for $e$ on the $k$ variables associated with the $k$ vertices in $e$. For each variable we have the choice to pick the positive or the negative literal. These are $2^k$ different choices and the assignments covered by two different choices are disjoint. Since every assignment can be covered in this way the assignments get partitioned into $2^k$ parts. Simple averaging then guarantees that there exists a choice covering at least $1/2^k$ fraction of the assignments not covered so far. After $m$ iterations of greedily creating clauses covering the maximal number of uncovered assignments is at most $2^n \left(1 - 1/2^k\right)^m = \left(\frac{2}{(1 - 2^{-k})^{2^{-k}}}\right)^n < 1$. With all assignments covered the created formula $F$ is unsatisfiable and by construction also induces $H$ as required.
\end{proof}

\medskip

The above lemma shifts the focus towards finding a suitable dense $k$-uniform hypergraph in order to find an unsatisfying $k$-CNF. The following proof of Theorem \ref{thm:lowerbound} shows that $\alpha$-shrinking a maximal $\alpha$-intersecting $(k+\alpha)$-uniform hypergraph results in hypergraphs with nice additional extremal properties. Furthermore choosing a large number of vertices results in hypergraphs that obey the bound in Lemma \ref{lemma:construction-unsat} and can thus be transformed into the desired unsatisfiable $k$-CNF.

\iffalse
\begin{theorem}\label{thm:upperbound}
For any $k$ and $\alpha < k$ there is an unsatisfiable $\alpha$-intersecting $k$-CNF with at most
$$i = \alpha^2 2^{(k+\alpha)(2+1/\alpha)} k^{(5+2/\alpha)}$$
clause intersections 
$$n = 2\alpha 2^{k/\alpha} k^{2(1+1/\alpha)}$$
variables
$$m = \alpha 2^{(k+\alpha)(1+1/\alpha)}{k^{2(1+1/\alpha)}}$$
clauses and a maximum degree of
$$\Delta < \alpha 2^{(k+\alpha)}k^3$$.
\end{theorem}
\fi

\begin{proof} (of Theorem \ref{thm:upperbound})\\
We create the formula by applying Lemma \ref{lemma:construction-unsat} to an $\alpha$-intersecting hypergraph. We obtain this hypergraph by  $\alpha$-shrinking a maximal $\alpha$-intersecting $(k+\alpha)$-uniform hypergraph. Observe that it makes the resulting hypergraph $k$-uniform.\\

We choose $n = \alpha \left(2^{k+\alpha}k^{2(\alpha+1)}\right)^{1/\alpha}$ and build a $\alpha$-intersecting $(k+\alpha)$-uniform hypergraph on $n$ vertices. The choice of $n$ is such that Lemma \ref{lem:maximal-hypergraphs} guarantees that we can find a $k+\alpha$-uniform hypergraph $H$ with $$m = \frac{n^{\alpha+1}}{k^{2(\alpha+1)}\alpha^\alpha} = \alpha 2^{(k+\alpha)(1+1/\alpha)}{k^{2(1+1/\alpha)}} = 2^{k+\alpha}n$$ edges. This is sufficiently large number of edges to construct an unsatisfiable formula $F$ for hypergraph $H$ using Lemma \ref{lemma:construction-unsat}. Having constructed $H$, we $\alpha$-shrink it to obtain hypergraph $H'$ and its correspoding formula $F'$.  Note that $F'$ is unsatisfiable because $F$ is unsatisfiable.  The significant advantage about $H'$ obtained this way is that it has guarantees on the maximum degree and on the number of clause intersections. More precisely, we claim that $H'$ has maximum degree less than $(mk)^{1/(1+1/\alpha)}$. Suppose that after the shrinking there is a vertex of degree $d > (m(k+ \alpha))^{1/(1+1/\alpha)}$. Lemma \ref{lemma:shrinking} shows that in this case $H$ contains at least $d^{1/\alpha}$ vertices of degree larger than $d$. The disjoint union of the edges containing those vertices has size at least $d^{1+1/\alpha}$ and each edge gets counted at most $(k+\alpha)$ times this way. Therefore $H$ would have at least $d^{1+1/\alpha}/(k+\alpha) > m$ edges --- a contradiction.\\

Lemma \ref{lemma:construction-unsat} transforms the hypergraph $H$ into an unsatisfiable $k$-CNF formula $F$. This formula has $n$ variables and $m$ edges since shrinking preserves these quantities. Furthermore, the maximum degree $\Delta$ of $F$ is at most $(mk)^{1/(1+1/\alpha)}$ which also implies that the number of clause intersections is at most $m\Delta$.
\end{proof}

\medskip

{\bfseries Acknowledgments}\\
The research for this paper was done in the summer of 2009 while the authors where at MSR India. We thank Aravind Srinivasan for pointing out the questions about satisfiability thresholds for almost disjoint $k$-CNF formula which lead to this paper. We also want to thank Dominik Scheder.

\bibliographystyle{abbrv}    % bibliography using BibTex format
\bibliography{alphaCNF}

\end{document}